\documentclass[journal]{IEEEtran} 

\usepackage{comment}
\usepackage{verbatim}
\usepackage{amsthm}
\usepackage{xcolor}
\usepackage{float}
\usepackage{amsmath}
\usepackage{subcaption}
\usepackage{graphicx}
\usepackage{dcolumn}
\usepackage{bm}
\usepackage[colorlinks=true, allcolors=blue]{hyperref}
\usepackage[mathlines]{lineno}

\newtheorem{corollary}{Corollary}
\newtheorem{definition}{Definition}

\begin{document}
\pagestyle{plain}

\title{Purification Strategy Optimization for Entanglement Routing in Quantum Networks}

\author{
\IEEEauthorblockN{Javier Vecino Peñas, Ana Fernández-Vilas, Rebeca P. Díaz-Redondo, Sergio Gándara Gándara, Manuel Fernández-Veiga}\\
\IEEEauthorblockA{
atlanTTic Research Centre, Universidade de Vigo
}
}

\maketitle

\begin{abstract}
Quantum networks rely on the efficient distribution of entanglement to enable long-distance quantum communication and information processing. A key challenge in these networks is the design of routing protocols capable of maintaining high-quality entanglement in the presence of noise, decoherence, and imperfect operations, which progressively degrade the fidelity of entangled states through entanglement swapping. Entanglement purification provides an effective mechanism to mitigate this degradation at the cost of additional resources. In this work, we study purification-aware quantum routing and formulate the problem of selecting optimal purification strategies as an optimization task. By employing dynamic programming techniques, we identify strategies that optimally balance resource consumption and end-to-end fidelity, demonstrating the effectiveness of our approach across different scenarios.
\end{abstract}
\begin{IEEEkeywords}
    Quantum network, fidelity, decoherence, entanglement routing, entanglement swapping, purification protocol, dynamic programming.
\end{IEEEkeywords}

\section{INTRODUCTION}\label{intro}
Quantum networks (QNs) aim to extend quantum communication and information processing across geographically separated nodes by distributing entanglement as a fundamental resource. Such networks are expected to support a wide range of advanced functionalities, including secure quantum key distribution (QKD), distributed quantum computing, and network-assisted quantum sensing, thereby challenging conventional paradigms of classical communication systems.

In classical communication networks, the set of rules that determines how information is efficiently delivered across a network is referred to as a \textit{routing protocol}. In the quantum domain, the transmission and management of quantum information (qubits) are governed by quantum routing protocols. Several quantum routing algorithms have been proposed~\cite{Abane2025EntanglementRouting}, focusing on efficient information management while guaranteeing security through techniques inspired by quantum cryptography.

A key operational task in quantum routing protocols is the faithful transfer of quantum states across the network. This process is inevitably affected by environmental factors such as channel losses, decoherence, and imperfect node operations, which lead to a degradation of quantum states as they propagate through the network. To mitigate this degradation, one may resort to techniques based on local operations and measurements, commonly referred to as \textit{entanglement purification protocols}. Purification protocols consist of a sequence of operations applied to noisy quantum states in order to probabilistically increase their purity and fidelity despite the presence of noise.

Purification protocols can be applied iteratively to further counteract the effects of noise. The number of purification rounds applied to each entangled link in a quantum network can be represented as a tuple, which we refer to as a \textit{purification vector} or \textit{purification strategy}. When combined with a routing protocol, such a strategy gives rise to a purification-aware routing protocol, capable of not only delivering quantum information efficiently but also maintaining high levels of state fidelity across the network.

The objective of this work is to determine the purification strategy for a given quantum routing problem that maximizes the end-to-end output fidelity within a prescribed time threshold and under a set of operational constraints. Optimization problems of this nature are generally NP-hard, implying that exact solutions cannot be obtained in polynomial time for large network instances.

To address this challenge, we employ \textit{dynamic programming}. Dynamic programming is an algorithmic paradigm introduced by Richard Bellman in the 1950s~\cite{Bellman1957}, which simplifies complex optimization problems by decomposing them into a sequence of overlapping subproblems. By systematically precomputing and reusing solutions to these subproblems, dynamic programming enables a significant reduction in computational complexity and allows for the efficient exploration of optimal decision strategies.

This work is structured as follows. In Section~\ref{state}, we review previous works that provide the foundation for the present study. In Section~\ref{theory}, we introduce the theoretical framework required to describe the system and to characterize fidelity degradation mechanisms. Sections~\ref{model} and~\ref{solution} present the problem under consideration and the numerical methods employed to obtain its solution. Finally, in Section~\ref{results}, we discuss and interpret the results of the numerical simulations, and Section~\ref{conclusion} concludes the work with a summary and outlook.

\section{STATE OF THE ART}\label{state}
The theoretical and experimental foundations of quantum teleportation were established in the late 1990s, with some protocols demonstrating the transfer of unknown quantum states using entanglement and classical communication. Pioneering experiments by Bouwmeester et al., Pan et al. \cite{Pan2012Teleportation} and Bennett et al. \cite{Bennett1993Teleportation}, among others, showed the feasibility of teleporting quantum states over laboratory-scale distances and later over several kilometers of optical fiber and free-space links.

More recent research \cite{Huang_2022} explores entanglement purification in practical scenarios, including long-distance purification protocols using hyperentangled states and techniques suitable for integration into quantum repeater architectures. These works demonstrate measurable improvements in entanglement fidelity over noisy channels and highlight the utility of purification for real, noisy quantum networks.

In the context of network-level performance, studies \cite{campbell2024testinglinkfidelityquantum, Cicconetti2021} have addressed fidelity-guaranteed entanglement routing and resource scheduling for purification, illustrating how fidelity thresholds influence path selection, resource consumption, and throughput in multi-node networks. Such strategies are a key point when balancing the consumption of entangled pairs for purification against the need to maintain high end-to-end fidelity in complex network topologies.

Finally, it is important to remark that recent works have also addressed the optimization of entanglement distribution strategies in quantum networks. In particular, Huang \emph{et al.}~\cite{Huang2025DecoherenceAware} propose a decoherence-aware framework to optimize entangling and swapping strategies in entanglement routing employing novel techniques to deal with NP-hard optimization problems. Their approach involves the definition of auxiliary functions that do not explicitly depend on fidelity, thus simplifying the problem. This work focuses on optimizing the swapping and entanglement strategies, demonstrating that the skewer the swapping strategy is, the higher the end-to-end entanglement fidelity will be. In this work's context, a skew swapping strategy translates into swapping the initial nodes' entanglements first and then focusing on the rest of the entanglements. While this work focuses primarily on entanglement generation and swapping optimization, due to the simplifications made to avoid computing the overall fidelity, it ignores the role purification might play in real scenarios. Our work complements it by emphasizing the importance of entanglement purification and systematically optimizing purification strategies to enhance end-to-end fidelity.

\section{THEORETICAL BACKGROUND}\label{theory}

\subsection{Entanglement and Fidelity}
In section \ref{intro}, we stated that quantum networks rely on the entanglement between quantum states and that the purity of these entanglements decreases due to its interaction with the environment. To rigorously define what an entanglement is, let us first define the \textit{Bell states}. Suppose Alice \textit{A} possesses a qubit in superposition between $|0\rangle$ and $|1\rangle$, and suppose that Bob \textit{B} also possesses a superposed qubit between $|0\rangle$ and $|1\rangle$. If they were to measure their own qubit and Alice's measurements were correlated to Bob's, then their qubits are said to be \textit{entangled}. The entanglement between their qubits can be expressed as a linear combination of some vector basis of maximally entangled states, which are called \textit{Bell states}. This vector basis can be expressed as:

\begin{subequations}\label{eq:bells}
\begin{align}
        |\Phi^{\pm}\rangle &\equiv \dfrac{|0\rangle_A\otimes|0\rangle_B\pm |1\rangle_A \otimes |1\rangle_B}{\sqrt{2}},\label{bellphi}\\
        |\Psi^{\pm}\rangle &\equiv \dfrac{|0\rangle_A\otimes|1\rangle_B\pm |1\rangle_A \otimes |0\rangle_B}{\sqrt{2}}.\label{bellpsi}
\end{align}
\end{subequations}
In this context, a \textit{pure} entanglement will be described by the Bell state $|\Phi^+\rangle$ however, this state cannot remain \textit{pure} as time passes due to interactions with the environment, meaning that the correct way to express it is via a bipartite density matrix $\rho$ named \textit{Werner state} \cite{Werner1989, VanMeter2014} :
\begin{equation}\nonumber
\rho = \alpha\,|\Phi^+\rangle\langle\Phi^+| + \beta\,|\Phi^-\rangle\langle\Phi^-| +\delta \,|\Psi^+\rangle\langle\Psi^+| +\varepsilon\,|\Psi^-\rangle\langle\Psi^-|,
\end{equation} 
where $\alpha+\beta+\delta+\varepsilon =1$ and states $|\Phi^\pm\rangle$ and $|\Psi^\pm\rangle$ being the maximally entangled two-qubit Bell states described in equations \eqref{bellphi}-\eqref{bellpsi}. As the desired, pure state is the $|\Phi^+\rangle$ Bell state, we will rename its coefficient $\alpha$ to \textit{fidelity} $F$. The process in which the pure state $|\Phi^+\rangle$ interacts with the environment thus degrading its fidelity $F$ is called \textit{decoherence}. In general, an empirical formula for the process of decoherence exists \cite{PhysRevX.9.031045} and is described as follows:
\begin{equation}\label{decoherence}
    F_{d}(F,t) = F\cdot e^{ -\left(\gamma \cdot t\right)^\kappa},\quad t\ge0,
\end{equation}
where $\gamma$ is the decoherence rate measured in $\text{s}^{-1}$ and $t$ is the time measured in seconds. The variable $\kappa$ is a constant that encapsulates how strong the decoherence loss is, from now on will be set to one $\kappa = 1$. Since a quantum network relies on the entanglement between quantum states over long distances, a way of extending the entanglements is key to guarantee communication. Entanglement swapping is a technique used to \textit{lengthen} the entanglement distance by taking two entangled states that share a quantum node and performing a series of local measurements such that the resulting entanglements spans more distance than the individual initial entanglements. The fidelity of the output entanglement after swapping \cite{VanMeter2014} can be modeled as:
\begin{equation}\label{swapping}
    F_s(F_1, F_2) = F_1 \cdot F_2 + \dfrac{1}{3}(1-F_1)(1-F_2),
\end{equation} where $F_i$ denotes the fidelity of each input entanglements. The effect equation \eqref{swapping} has on the fidelities can be understood by the fact that the higher the input fidelities are, the less the fidelity will degrade after swapping two entanglements.

The process of purification is the process in which, by taking an auxiliary entangled pair and performing a sequence of actions to both states, the fidelity of the original entanglement is increased. Theoretically, a purification process can be repeated indefinitely but, in a real world scenario, the purification process would waste time and auxiliary entangled pairs. A plethora of purification protocols exists, but in this work we will focus on one of the most simple non-trivial one, named the IBM (BBPSSW) \cite{Bennett_1996} recursive protocol, whose recursive expression reads: 
\begin{equation}\label{Fid}\nonumber
    F'(F) =  \dfrac{10F^2-2F+1}{8F^2-4F+5}, \quad \text{where} \quad F\ge0.5.
\end{equation}
This recursive protocol increases the fidelity of an entanglement if and only if the input fidelity is above the value $0.5$ and stops purifying when fidelity reaches a value of $1.0$. In theory, every purification must be done before swapping in order to achieve higher fidelities overall. An important remark is that although we employ the $\text{BBPSSW}$ recursive protocol in this work, any kind of purification protocol can be used. This is because the main objective is increasing fidelities as much as we can, which any purification protocol achieves.

After describing all the processes that will come into play in the system (entanglements between nodes, entanglement swapping and purification processes) it is necessary to point out that, in a real system, each process has a distinct success probability. Starting with the generation of entanglements between two nodes, its success probability \cite{10.1145/3387514.3405853} decreases exponentially with distance:
\begin{equation}
    \text{P}_{ent} = 1- \left(1-e^{-\lambda \cdot L}\right)^\eta,\nonumber
\end{equation}
where $\lambda$ refers to an absorption coefficient, $L$ is the distance between two adjacent nodes and $\eta$ represents the number of entangling attempts. 
The success probabilities of swapping and purification will be labeled $\text{P}_{s}$ and $\text{P}_{pur}$ respectively, and in principle will be modeled as a number between 0 and 1. The combined success probability of a path $p$ \cite{10.1145/3387514.3405853} from the Source $(s)$ to the Destination $(d)$ will be:
\begin{equation}\nonumber
    \text{P}(p) = \prod_{v_i \in p} \text{P}_{ent}\cdot (\text{P}_{pur})^{n_{ij}} \cdot  \text{P}_s,
\end{equation}
where $\text{P}(p)$ denotes the success probability of the path, $p$ is the path from $s$ to $d$ and $n_{ij}$ is the number of purification rounds. 
\section{NETWORK MODEL AND PROBLEM FORMULATION} \label{model}
\subsection{Modeling the Quantum Network}\label{graph}
In order to model the Quantum Network, let us consider a graph $G = (V, E)$ where $V$ is the set of all the vertices, which will represent the number $|V|$ of quantum nodes of the network, and $E$ is the set of all the edges. Each quantum node $v_i$ possesses an amount of quantum memory $m$ representing the amount of qubits that are stored in it, up to a memory limit $m_{max}(v_i)$ for each node. Let us also define a SD pair (source-destination) $r\in R$ and a predefined path $p\in \mathcal{P}(r)$ calculated using one arbitrary path-finding algorithm. The total number of nodes that belong to the selected path $p$ can be defined as: $ N = |\{v_{i} :v_i \in p\}|. $
Time will be modeled as a discrete variable instead of a continuous one, and each discrete division will be referred to as \textit{time slot}. Each time slot will be denoted as $t\in T$ and $t_f(r)$ will be the final time slot used by every SD pair. The nodes of the network that are capable of establishing an entanglement between them weighed by the probability $\text{P}_{en}$ with an initial fidelity $F(v_i, v_j)$ are called \textit{adjacent} nodes. The entanglement that links two adjacent nodes $v_i$ and $v_j$ will be denoted as $\rho_{ij} = (v_i, v_j)$. The data qubits are transmitted via E2E entangled pairs in the QN. A long entangled pair may be generated by performing a swapping process between two entangled pairs that share a node, with a success probability $\text{P}_s$. Finally, the fidelity of an entangled pair can be purified with a probability $\text{P}_{pur}$ by employing an auxiliary entangled pair that is not stored in the nodes' memories. We will consider that each node can perform one action per time slot, this includes idling, purification, swapping and entanglement generation. Since all these parameters must be taken into account when modeling the network, we need to figure out a rigorous way to describe how to handle the resources in the QN: 
\begin{definition}
We define the strategy tree $\Upsilon$ as an activity-on-vertex tree
structure, describing an entangling, swapping and purification strategy between node $v_1$ and node $v_N$ on a chosen path $p=\{v_1, v_2,\dots, v_N\}$. There are three distinct parts of a strategy tree: the entangling part $\Upsilon_{ent}$, the swapping part $\Upsilon_{s}$ and the purification part $\Upsilon_{pur}$.
\end{definition}
The entangling and swapping parts of the decision tree $\Upsilon$ describe how to properly manage the entangling and swapping strategies respectively, while the purification part focuses on the purification strategy applied to a selected path $p$ of the network. The entangling and swapping parts of the decision tree $\Upsilon$ are further covered in \cite{Huang2025DecoherenceAware}, which means that a definition of the purification tree will be needed. The purification part of the tree $\Upsilon_{pur}$ consists on leaves that represent entangled pairs $(v_i,v_j)_{n_{ij}}$ where $n_{ij}$ denotes the rounds of purification applied to that pair. As we previously commentated, in \cite{Huang2025DecoherenceAware} it is shown that, the skewer the entangling and swapping trees are, the higher the final fidelity will be, which translates into a purification-only problem. 

To mathematically express how all these factors may influence the overall strategy and decision tree, let us define the following quantities:

\begin{definition}
A numerology $\mu$ represents the distribution of resources of an associated strategy tree. It can be formally defined as:
    \begin{equation}\nonumber
    \mu=\{(\rho_{ij}, S_{ij},T(\rho_{ij}, S_{ij}))\,|\,\rho_{ij}=(v_i,v_j)\in \Upsilon\},
\end{equation}
\end{definition}
\noindent which can be better understood by saying that a numerology is a tuple that encapsulates the time slots $T(\rho_{ij},S_{ij})$ \textit{used} by the pair $\rho_{ij}$ with $S_{ij}$ rounds of purification. In this context, $S$ is a vector whose components $S_{ij}$ indicate the number of purification rounds per entanglement. The set of all the possible numerologies of a problem will be denoted as $\mathcal{M}_p(r)$, where $p$ is the selected path and $r$ the SD pair.

\begin{definition}
The quantum memory of a quantum node $m_{\mu}$ is the cardinality of the associated numerology $\mu$ fulfilling some conditions. The memory can be mathematically described as:
    \begin{align}
    m_{\mu} (t,v_i) = |\{ (\rho_{ij}, S_{ij,}T(\rho_{ij},S_{ij})) \in \mu
     \nonumber\}| 
\end{align}
where $v_i\in\rho_{ij}$ and $t\in T(\rho_{ij},S_{ij})$.
\end{definition}

\begin{definition}
    A \textit{purification strategy} $S$ is a positive integer-valued vector whose components are defined as:
    \begin{equation}\nonumber
        S\equiv\left(n_{1,2},\, n_{2,3},\, \dots,\,n_{N-1,N},\, n_{1,3},\,n_{1,4},\,\dots, \, n_{1,N} \right),
    \end{equation}
    where $n_{i,j}$ is the number of purification rounds applied to entanglement $ij$ and $N$ is the number of nodes of the path we selected. 
\end{definition}

\begin{corollary}
    The dimension of the purification strategy $S$ is $2N-3$, where $N$ is the total number of nodes of the selected path.
\end{corollary}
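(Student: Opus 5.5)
The plan is to count the components of $S$ directly from the Definition of a purification strategy, by splitting them into two disjoint families.

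The first family consists of the elementary links between adjacent nodes along the path $p=\{v_1,\dots,v_N\}$. These are $n_{1,2}, n_{2,3},\dots,n_{N-1,N}$, indexed by $n_{i,i+1}$ for $i=1,\dots,N-1$, so there are $N-1$ of them.

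The second family consists of the composite entanglements produced by swapping. Here I invoke the fact, recalled from \cite{Huang2025DecoherenceAware} before the definition of numerology, that the optimal swapping tree is maximally skewed. In that tree the entanglement anchored at $v_1$ is extended one hop at a time. Each swap of $(v_1,v_{k-1})$ with $(v_{k-1},v_k)$ creates a new pair $(v_1,v_k)$, for $k=3,\dots,N$. Each such pair is a leaf of $\Upsilon_{pur}$ that can be purified, and it contributes the components $n_{1,3},\dots,n_{1,N}$. That gives $N-2$ components, since $k$ ranges over $N-2$ values.

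Next I check that the two families do not overlap. Every index in the second family has $j-i\ge 2$, while every index in the first has $j-i=1$, so no pair is counted twice. Adding the two counts gives $(N-1)+(N-2)=2N-3$.

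The only real obstacle is justifying that these are exactly the entanglements that occur. The second family relies on the skewed strategy: a balanced swapping tree would create different intermediate pairs, though still $N-2$ of them, since each of the $N-2$ swaps at the intermediate nodes produces exactly one new pair. I will note this as a remark, so the count $2N-3$ holds regardless of tree shape. As a sanity check, $N=2$ gives the single link with dimension $1$, and $N=3$ gives $(n_{12},n_{23},n_{13})$ with dimension $3$.
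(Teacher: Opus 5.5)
Your proof is correct and follows the paper's argument. The paper likewise splits $S$ into the adjacent-link block $(n_{1,2},\dots,n_{N-1,N})$ of size $N-1$ and the block $(n_{1,3},\dots,n_{1,N})$ of size $N-2$; it gets $N-2$ by starting from all $n_{1,j}$, $j=1,\dots,N$, and discarding $n_{1,1}$ and the already-counted $n_{1,2}$, which amounts to your disjointness check. Your appeal to the skewed swapping tree, and your remark that any tree shape gives $N-2$ composite pairs, are extras not in the paper; the count already follows from the indices listed in the definition of $S$.
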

\begin{proof}
    Let us split the strategy $S$ into 2 different vectors: symmetric-base vector $s$ and asymmetric vector $a$. We may write $S$ as follows: $S = (s,a) \equiv s \cup a$ where $s \equiv\left(n_{1,2},\,\, \dots,\,n_{N-1,N}\right)$ and $a \equiv \left( n_{1,3},\,\dots, \, n_{1,N} \right)$.

    The dimension of $s$ is just the number of links (entanglements) between $N$ nodes in a simple chain (path), which is always $N-1$.
    The dimension of $a$ can be calculated by counting its components. Suppose that we had all possible combinations of $n_{1,j}$ as components of a vector $v$. Its dimension would be: 
    \begin{equation}
        \text{dim}(v) = \sum_{j=1}^N n_{1,j} = N,\nonumber
    \end{equation}
    but, since $n_{1,1}$ is a forbidden component because a node cannot entangle with itself and $n_{1,2} \in s$, the vector $v$ becomes $a$. Recall that we have removed 2 components from vector $v$ and arrived at vector $a$, thus yielding:
    \begin{equation}
        \text{dim}(a) = \sum_{j=3}^N n_{1,j} \equiv N-2,\nonumber
    \end{equation}
    and now combining both results we arrive at:
    \begin{equation}
        \text{dim}(S) = \text{dim}(s\cup a) = \text{dim}(s) + \text{dim}(a) = 2N -3.\nonumber
    \end{equation}
\end{proof}

In order to visualize in a more clear way all the parameters we defined, let us consider the following example: a small path comprised of 4 quantum nodes inside an arbitrary quantum network. Suppose that the initial fidelities are the tuple: $ F = (F_{1,2}, F_{2,3}, F_{3,4})=(0.86,\,0.73,\,0.9)$ and we select 3 purification strategies (with no decoherence) to see the impact purification has on the E2E fidelity: 
\begin{enumerate}
    \item \textbf{No purification}: if we do not purify our states, then first applying equation \eqref{swapping} to the first components of $F$ yields a fidelity $F_{1,3}$ with a value of $0.64$, and after the final swapping the final Fidelity will be $F_{1,4}=0.588$. This case corresponds to the cases studied by \cite{Huang2025DecoherenceAware} and it is represented in figure \ref{fig:dtnp}.
    \item \textbf{Purification of the first pair}: this strategy corresponds to figure \ref{fig:dtp} where the red box indicates the purification attempts (represented by a zigzag line) and, thanks to purification, the initial two fidelities become $(F_{1,2}, F_{2,3}) =(0.89,\,0.77)$ respectively, after their swapping the fidelity drops to $F_{1,3} =0.69$ and in the final swapping we arrive at $F_{1,4}=0.63$.
    \item \textbf{One round of purification everywhere}: exactly the same as the previous case but with the addition of 2 more purifications before and after the final swapping. Recall that, before the final step, the fidelities are $F_{1,3}=0.69$ and $F_{1,4}=0.9$. After one round of purification we get $(F_{1,3}, F_{3,4})=(0.72,,0.93)$ and for the final swapping, fidelity drops to $F_{1,4}=0.67$.
\end{enumerate}

\begin{figure}[hbtp]
    \centering
    \includegraphics[width=0.9\linewidth]{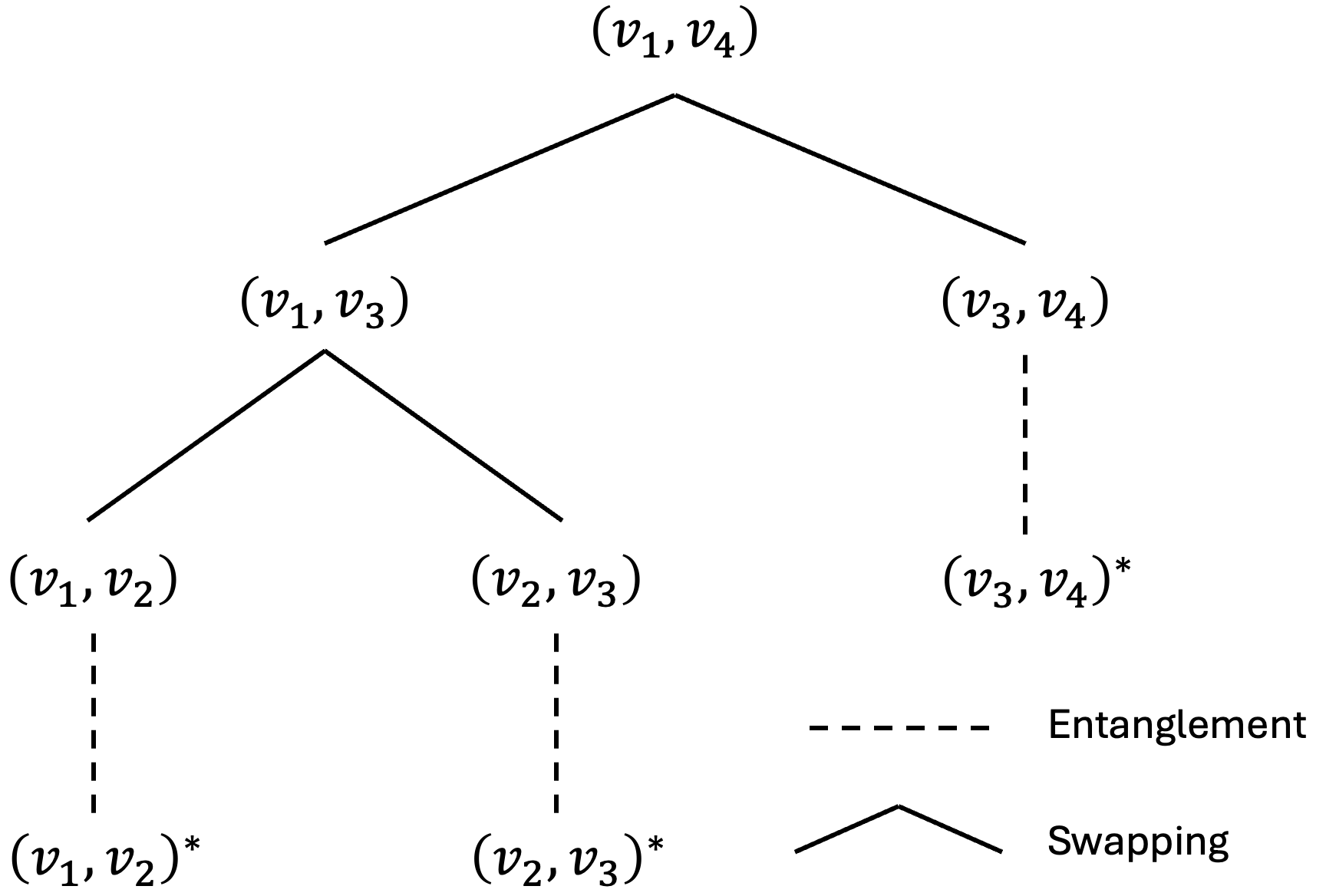}
    \caption{Decision tree for a 4-node path with no rounds of purification. States marked with an asterisk ($\ast$) refer to the creation of a pair. The idling of the nodes has been omitted.}
    \label{fig:dtnp}
\end{figure}

\begin{figure}[hbtp]
    \centering
    \includegraphics[width=1.0\linewidth]{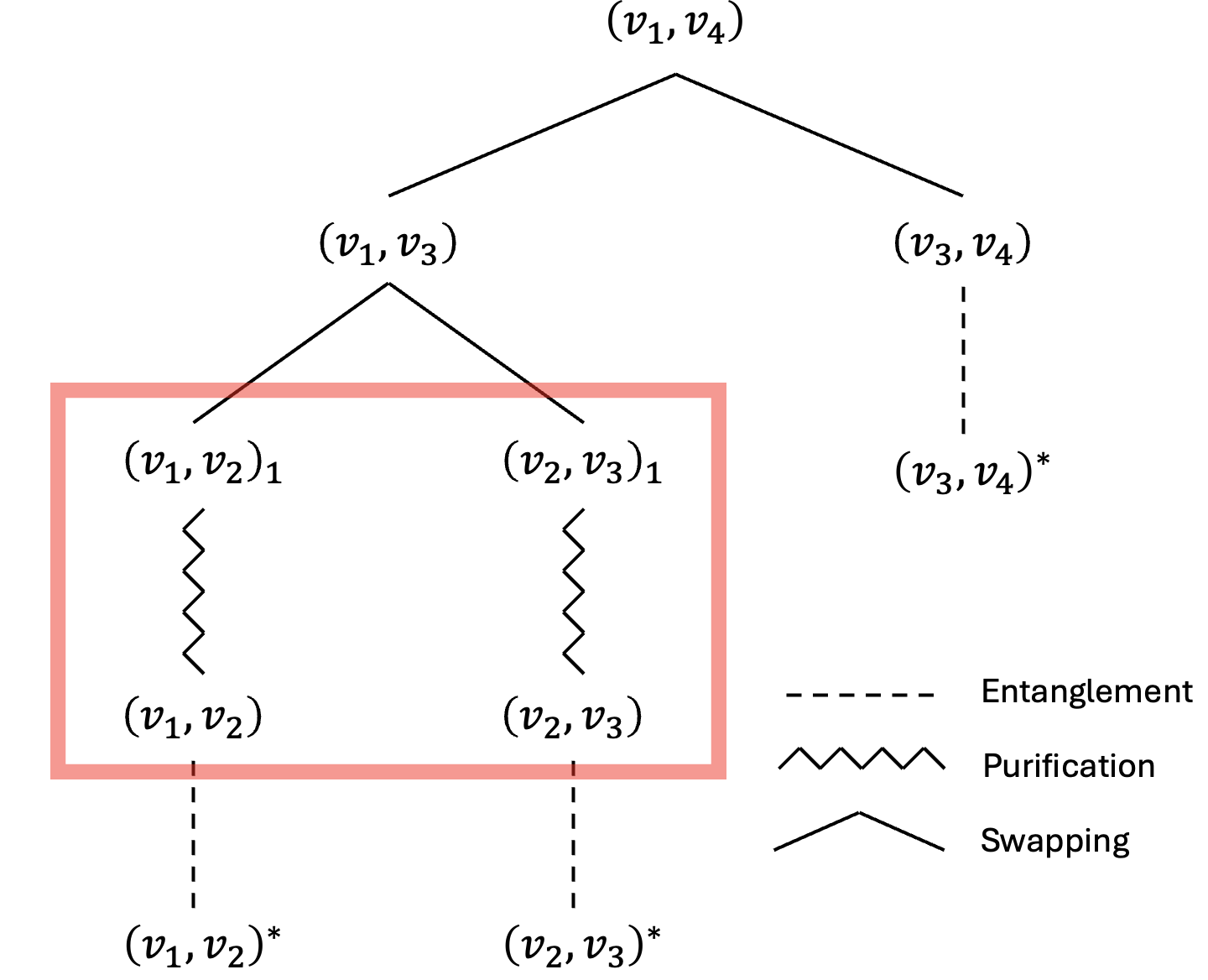}
    \caption{Decision tree for a 4-node path with one round of purification on the first pairs, represented by a red box. States marked with an asterisk ($\ast$) refer to the creation of a pair. The idling of the nodes has been omitted.}
    \label{fig:dtp}
\end{figure}

\subsection{Formulating the optimization problem} 
To mathematically define our problem, let us start by taking the QN represented by the graph $G$ explained on subsection \ref{graph}. Given a time threshold $\hat{T}$ and a fidelity threshold $\hat{F}$ the goal is to maximize the E2E fidelity $F(\mu)$ with the following constraints:

\begin{enumerate}
    \item The final time slot $t_f(r)$ for every SD pair shall not be greater than the threshold $\hat{T}\in T$.
    \item The final value of the Fidelity $F(\mu)$ must be greater than the threshold $\hat{F}$.
    \item The amount of memory used on any node $m_\mu(t,v_i)$ must be less than the maximum memory $m_{max}(v_i)$ at any time.
    \item Only one numerology $\mu\in \mathcal{M}_p(r)$ must be chosen.
\end{enumerate}

Before stating the complete problem, it is useful to define a tuple containing the SD pair $r$, the threshold time $\hat{T}$ and the threshold fidelity $\hat{F}$ which will be referred to as a \textit{request}, denoted by $\mathcal{R}$. The complete protocol should be able to efficiently manage this tuple $\mathcal{R}$ while employing purification processes to reach the threshold fidelity. Therefore, the complete problem can be expressed as:

\begin{subequations}\label{eq:opt_problem}
\begin{align}
    \text{Maximize}\quad
    & \sum_{\mathcal{R},\,p,\,\mu}
    F(\mu)\cdot\delta_{\mathcal{R}p\mu}, \label{eq:opt_obj} \\
    \text{Subject to}\quad
    & \sum_{\mathcal{R},\,p,\,\mu} m_\mu(t,v_i)\cdot\delta_{\mathcal{R}p\mu}
    \le m_{\max}(v_i), \label{eq:opt_c1} \\
    & \sum_{p,\,\mu} \delta_{\mathcal{R}p\mu} \le 1,
    \label{eq:opt_c2} \\
    & (t_f(r)-\hat{T})\cdot\delta_{\mathcal{R}p\mu} \le 0,
    \label{eq:opt_c3} \\
    & (F(\mu)-\hat{F})\cdot\delta_{\mathcal{R}p\mu} \ge 0,
    \label{eq:opt_c4} \\
    & \delta_{\mathcal{R}p\mu} \in \{0,1\},
    \label{eq:opt_c5} \\
    &\nonumber \\
    & \nonumber\forall r\in R,\;
      \forall p\in P(r),\;
      \forall \mu\in \mathcal{M}_p(r),\;\\
    & \nonumber\forall t\in T,\;
      \forall v_i\in V, \; \forall \mathcal{R},
\end{align}
\end{subequations}

where $\delta_{\mathcal{R}p\mu}$ equals 1 or 0 either a certain numerology $\mu$ is chosen or not. This corresponds to an integer maximization problem where the decision variable are integers. This kind of problems tend to be NP-hard problems that can be simplified by relaxing some of the conditions and constraints of the problem, such as promoting the integer-valued variables to real-valued ones or removing terms from the objective function. These simplifications translated into the problem \eqref{eq:opt_obj}-\eqref{eq:opt_c5} mean that the decision variable $\delta_{\mathcal{R}p\mu}$ would become a number between zero or one and that, in the objective function, the fidelities $F(\mu)$ would be downgraded into a constant or a fixed set of values. These exact procedure was adopted in \cite{Huang2025DecoherenceAware} using arguments about maximum and minimum fidelity on a QN, which would be impossible to implement alongside purification due to the possible values the E2E fidelity could take under purification protocols.

\section{FINDING SOLUTIONS TO THE PROBLEM}\label{solution}

\subsection{Brute-forcing a numerical solution}
Now that the problem has been formally defined, we should find a way to solve it. The first step is studying whether a solution can be found in polynomial time, exponential time, logarithmic time, etc. The main way to determine the complexity of a problem is to find a transformation that, in polynomial time, maps the main problem to another problem whose growth over time is already known \cite{HTCSVolA}. Since our goal is studying how purification affects fidelity, we cannot simplify the problem enough to be easily comparable to other well-known problems. The workaround to this problem is to check if, by brute-forcing all the possible solutions into the problem for a given number of nodes and parameters, we are able to find a strategy in a \textit{reasonable} amount of time. 

We developed a Python code to brute-force all possible strategies subject to the constraints stated in \eqref{eq:opt_c1}-\eqref{eq:opt_c5}: we assume that an ideal path between the source and destination has already been found and is composed of $N$ nodes. Then we feed the code with all the physical parameters (duration of a time slot, decoherence, purification methods, etc) and adopt a swapping strategy such that the first swaps occur between the nodes closest to the source until we reach the destination node. Then, we select a maximum number of purification rounds $n_{max}$, the number of iterations per execution, and let the program calculate the E2E fidelities and final time.

The output of these simulations contains the mean fidelity of each strategy and orders them from highest to lowest fidelities, thus ranking all the possible strategies. The next natural step is to vary any of the physical parameters of the network to see how the system reacts to fundamental changes. The success probabilities of swapping, entanglement, and purification do affect the system but in a trivial way: the closer they are to the value of 1.0, the less it affects the fidelity. The parameter that affects the system the most is the decoherence rate $\gamma$. We can see this effect figure \ref{fig:N7R320}, where we numerically simulated a path comprised of $N=7$ nodes with $n_{max} = 3$ maximum rounds of purification under 20 iterations.

\begin{figure}[hbtp]
    \centering
    \includegraphics[width=0.95\linewidth]{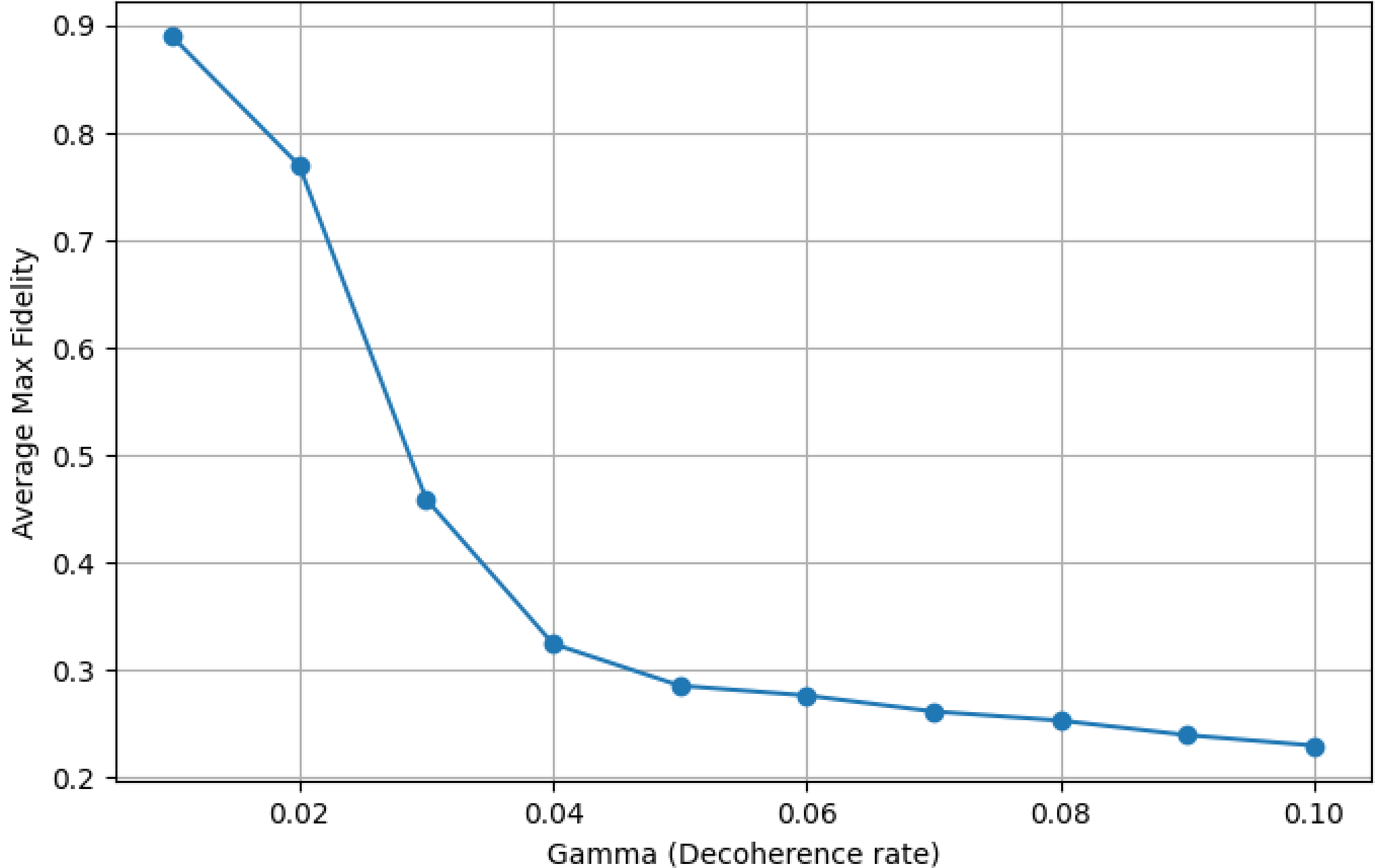}
    \caption{Average fidelity vs decoherence  $\gamma$ for 7 nodes, 3 maximum rounds of purification and 20 iterations.}
    \label{fig:N7R320}
\end{figure}

 The information we can extract from figure \ref{fig:N7R320} is that fidelity decreases as $\gamma$ increases, which means that in the fight between how much time is spent purifying versus the time the entanglement has to survive, the clear winner is decoherence due to the fact that BBPSSW cannot compete against the exponential degradation of decoherence \eqref{decoherence}. There are some ways of mitigating this fact, one being setting the maximum purification rounds to a low number, meaning that purification would take less time, or considering that thanks to the improvements in quantum memories, very low decoherence rates will be achieved.

In total, the key information we learned by brute-forcing the optimal strategy is that, for a high number of nodes $N$ in a path, the number of strategies we need to evaluate grows exponentially with the number of nodes $N$, more specifically, the exact number of strategies is $(n_{max}+1)^{\text{dim}(S)} = (n_{max}+1)^{2N-3}$, which means that our problem is likely to be NP-hard.

\subsection{Dynamic Programming Approach}
The exponential growth of the strategy space, defined by $(n_{max}+1)^{2N-3}$, makes brute-force optimization computationally intractable as the number of nodes $N$ increases. Since simplifying the model would require ignoring critical fidelity fluctuations, we employ Dynamic Programming (DP) to maintain physical accuracy while achieving computational efficiency. DP is an algorithmic paradigm that breaks down a complex optimization problem into a sequence of simpler subproblems. Then, we can pre-calculate and store optimal intermediate decisions, avoiding the redundant computations inherent in the recursive nature of quantum repeater chains.

To illustrate the necessity of this transition, consider the calculation of the $n$-th Fibonacci number ($F_n$). Naively, you would need to calculate the $n-1$-th and $n-2$-th numbers: $F_n = F_{n-1} + F_{n-2}$, where $F_n$ denotes the $n$-th Fibonacci number. The problem is that these numbers need $F_{n-2}$, $F_{n-3}$ and $F_{n-3}$ and $F_{n-4}$ respectively, which causes an exponential growth in time. To avoid this, we start by computing the first values of the sequence $F_0 = 0$, $F_1 = 1$ and then iteratively compute larger and larger numbers while storing these new values in order to calculate the next ones.

The technique of saving values that have already been calculated is called \textit{memoization} and although it sounds simple, it reduces the execution time significantly. In our case, we employ memoization to first compute the optimal strategy for a low number of nodes $n<N$ by brute-forcing the solution, then saving this strategy and using it as a starting point for the case of $n+1$ nodes. In this case, this approach is valid, as the addition of one node only creates two new entanglements without affecting the strategy or the entanglements of the previous case. This allows us to explore a reduced part of all the possible strategies' space, and, as in each step we just add the contribution of one new node, we are able to simulate the behavior of large systems without an exponential growth in execution time.

\subsection{Numerical Simulation of the selected path}
In order to simulate the system and compute optimal strategies, we developed another Python-written code that, given a starting and goal number of nodes, computes the best possible purification strategy $S$ using techniques from dynamic programming such as memoization. When generating the entanglements, we assign a random fidelity sampled from the range $[0.75, 0.99]$ \cite{Zhao2022E2E}. The duration of each time slot will be set to $1$ ms since all possible operations (purification, swapping, ...) take times of order $10^{-3}$ s \cite{Pompili2021NetworkTeleportation}. The probabilities of the operations are set to 1.0 due to the fact that the probabilities of the real scenarios oscillate between 0.9 and 0.95 \cite{10.1145/3387514.3405853, Pompili2021NetworkTeleportation, 9488850}. We adopted a skewed entangling and swapping strategy as it results in the highest possible E2E fidelity. Since optimizing both the E2E fidelity and the final time $t_f$ at the same time is impossible, we will focus only on maximizing the output fidelity. The other relevant physical parameter is the decoherence rate $\gamma$, which will be sampled between $\gamma \in [0,\infty)$. We ran tests from $N_{in} = 3$ to $N_{goal} = 200$, which means that visualizing the computed strategies is difficult due to the large amount (recall: $\text{dim}(S) = 2N-3$) of the strategy's components. For example, let us see what happens to the first three strategies in a $\gamma = 2.0 \,\text{Hz}$ run with 8 maximum rounds of purification: 
\begin{eqnarray}
    N=3 \Rightarrow S&=&\left(8,8,8\right),\nonumber\\
    N=4 \Rightarrow  S&=&\left(8,8,1,8,1\right),\nonumber\\
    N=5 \Rightarrow S&=&\left(8,8,1,1,8,1,1\right)\nonumber,
\end{eqnarray}
notice that, as we implemented memoization, the values that were previously calculated stay on the next iteration, thus forming pyramid-like structures. Visualizing the evolution of the strategy with the number of nodes is a hard task, and if we want to visualize the strategies for large number of nodes in a more comprehensible way, we will need to adjust \textbf{the order of the components of the strategy}, adding the new components computed on each round in the \textit{rightmost} part of the vector. For instance, if we focus on the previous case: 
\begin{eqnarray}
     S_3&\equiv &\left(8,8,8\right),\nonumber\\
   S_4 &\equiv & \left(8,8,8,1,1\right) = (S_3,1,1),\nonumber\\
   S_5&\equiv &\left(8,8,8,1,1,1,1\right) = (S_4, 1, 1)\nonumber,
\end{eqnarray}
where we define $S_N$ as the strategy for the $N$-node case with shifted components. Recall that we are changing the order of the components to visualize the complete purification strategy vector. Now, for the rest of the strategies in the case $\gamma = 2.0 \, \text{Hz}$, let us focus on some significant values:

\begin{alignat}{2}
    & S_{10} = (S_3, \vec{1}_{14}),  & \hspace{1cm} & S_{15} = (S_3, \vec{1}_{24})\nonumber \\
    & S_{26} = (S_3, \vec{1}_{45}, 0), & & S_{50} = (S_3, \vec{1}_{45}, \vec{0}_{49}) \nonumber\\
    & S_{100} = (S_3, \vec{1}_{45}, \vec{0}_{149}), & & S_{200} = (S_3, \vec{1}_{45}, \vec{0}_{349})\nonumber
\end{alignat}

where $\vec{0}_{n}\, \,\text{and}\,\,\vec{1}_n$ stands for a vector of dimension $n$ whose components are all zeros or ones, respectively. Although rigorous, these representations of the results are not visual enough, which means that turning to a graphical way of representing the results may help us more when interpreting the results. One way to graph the behavior of the strategies is to compute the mean number of purification rounds for each value of $N$ and then plot it against the number of nodes.
\begin{figure}[hbtp]
    \centering
    \includegraphics[width=0.95\linewidth]{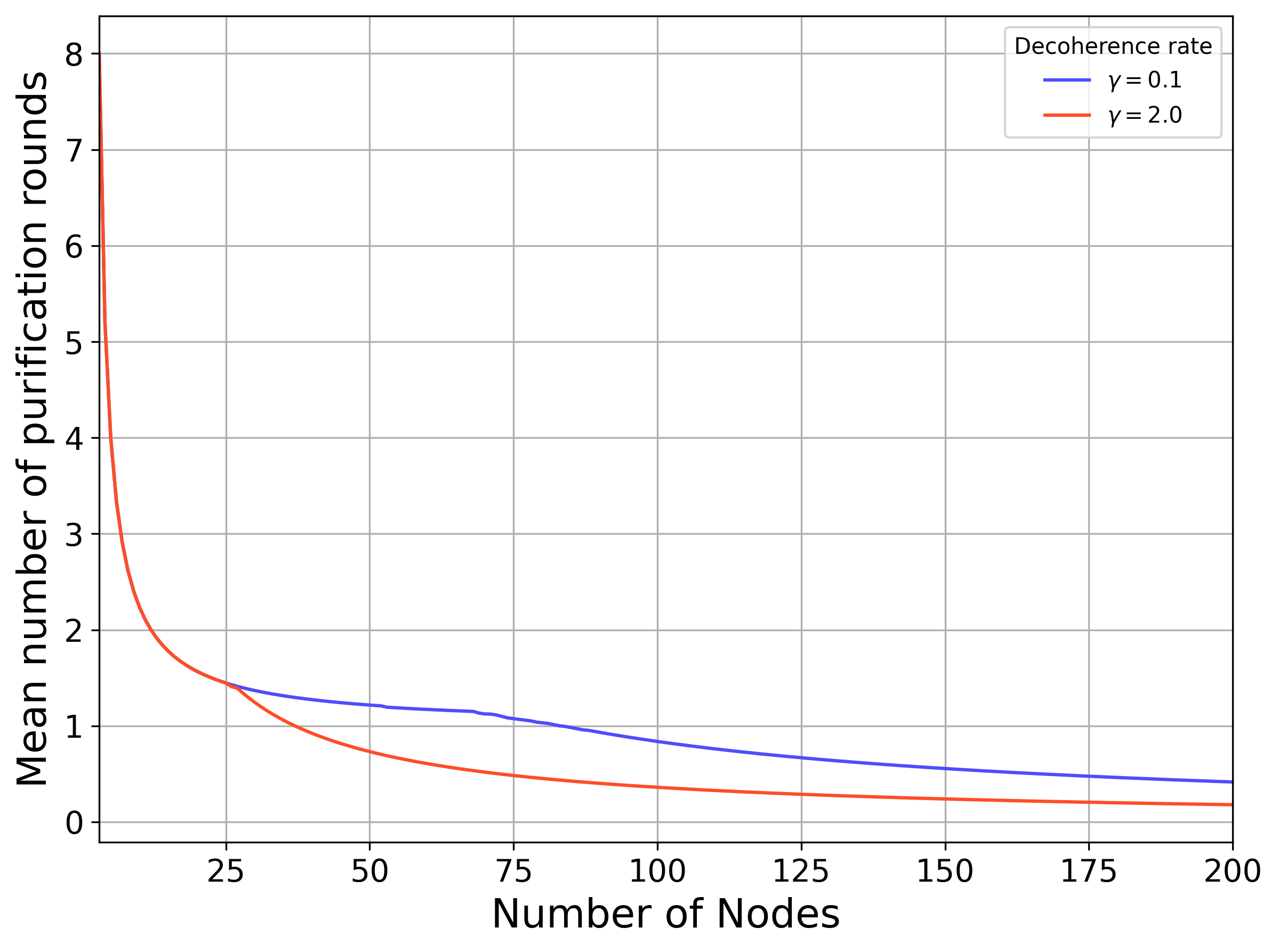}
    \caption{Mean purification rounds against number of nodes N, decoherence rate $\gamma = 0.1,\, 2.0 \, \text{Hz}$.}
    \label{fig:0.1}
\end{figure}
\begin{figure}
    \centering
    \includegraphics[width=0.95\linewidth]{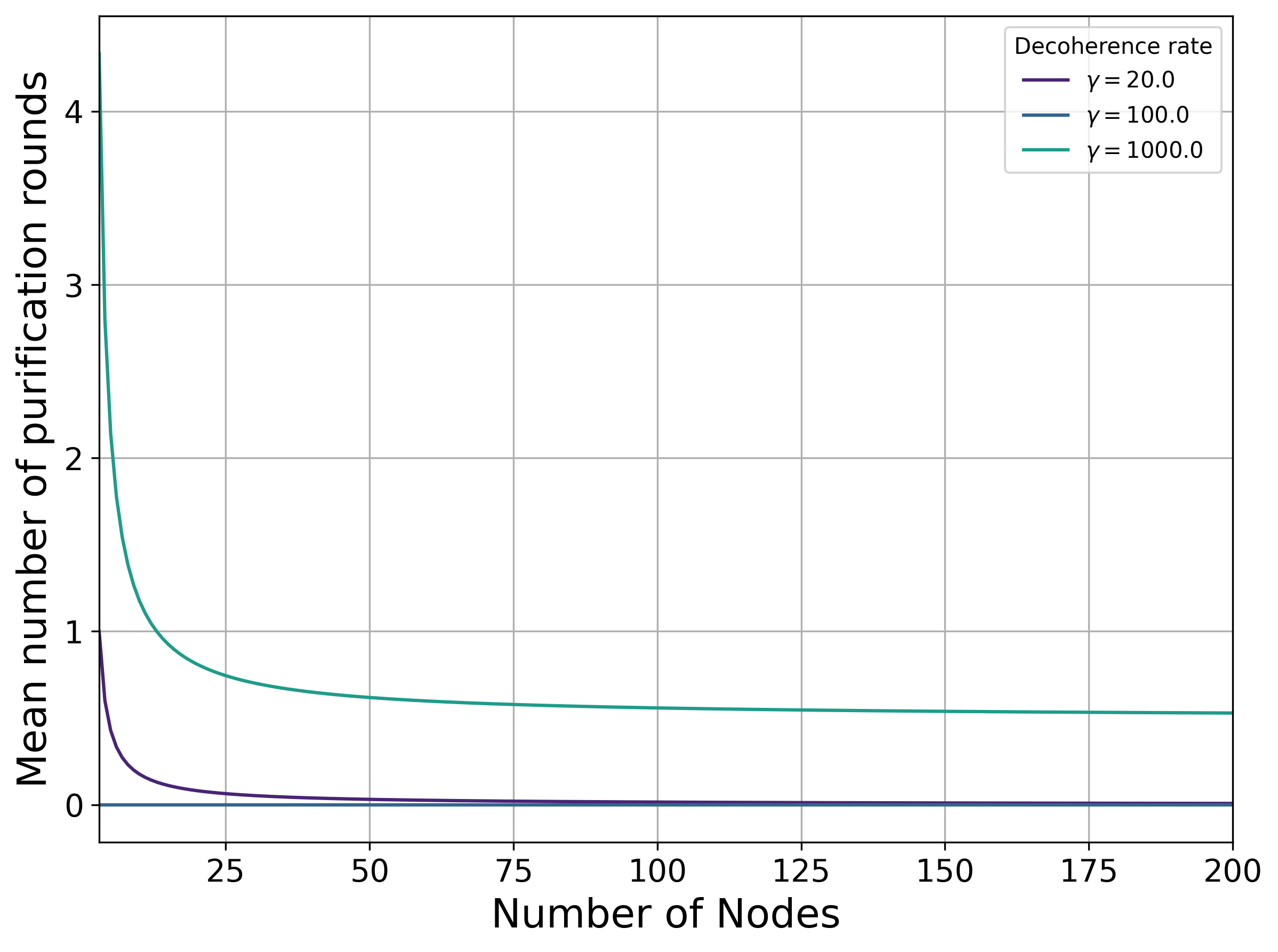}
    \caption{Mean purification rounds against number of nodes N, decoherence rate $\gamma = 20.0, \, 100.0,\, 1000.0 \,\text{Hz}$.} 
    \label{fig:1000.0}
\end{figure}

By doing so, we obtain figures~\ref{fig:0.1}-\ref{fig:1000.0} which seem to show that, for a low number of nodes (first rounds that can be purified in large paths), the best strategy is to purify as much as we can since the mean number of purification rounds is high as $N \rightarrow 0$. As $N$ grows, it is more efficient to purify less and less rounds.

If we look at Fig. \ref{fig:1000.0}, the cases where $\gamma$ starts to increase seem counterintuitive. The cases where $\gamma$ equals $20.0$ and $100.0$ Hz show that, for increasing values of the decoherence, purification stops being useful: entanglements lose fidelity faster than the gains from the purification processes. We may expect the case where $\gamma = 1000.0$ Hz to behave in the same way; however, we can predict that when the product $\gamma \cdot t$ tends to one, the system must react and behave differently. If we recall equation \eqref{decoherence} then it is clear that, in the $\gamma \cdot t \sim 1$ regime, fidelity will drop by a factor $1/e$ at each time slot, triggering a \textit{phase transition} in the system. A phase transition \cite{FernandezPinedaVelasco2009} is a process in which a physical system starts to behave in a different manner, resulting in an overall change in its fundamental defining properties. Examples of this phenomenon can be seen on the different states of matter, the Ising model for magnetic materials, or super-fluidity. This effect can be seen in figure \ref{fig:transition} where, for values of $\gamma$ close to $1000.0$ Hz, the purification strategies do not differ significantly from one another. The presence of quantities or parameters that seem to \textit{freeze} when varying the conditions of a physical system is a clear indicator of phase transitions. To visualize this \textit{freezing}, let us imagine the transition from water to water vapor: water boils at $100 \,\,^{\text{o}}\text{C}$ under normal conditions and stays at this temperature until all of the water becomes water vapor. In this case, the strategy vector's components play a similar role: they do not change until the system \textit{crosses} the point where $\gamma \cdot t \sim 1$.

\begin{figure}[hbtp]
    \centering
    \includegraphics[width=1.0\linewidth]{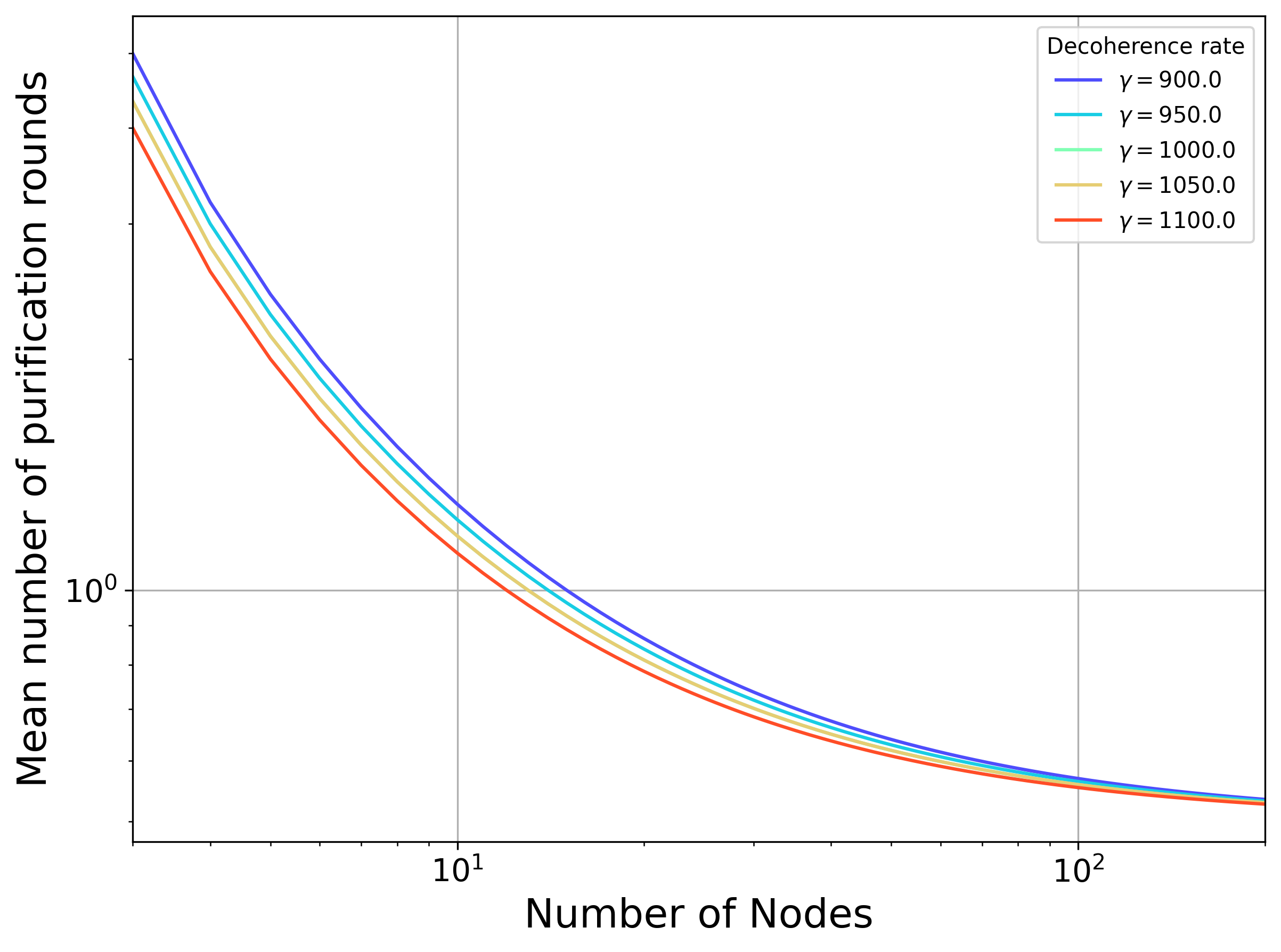}
    \caption{Mean number of purification rounds against the number of nodes $N$, decoherence rates near the transition, log scale. The $\gamma = 1000.0$ and $\gamma = 1050.0$ lines overlap one another.}
    \label{fig:transition}
\end{figure}

\section{EVALUATION OF THE RESULTS}\label{results}
After adopting a memoization technique and running some tests, the results seem to show that if you want to improve the E2E fidelity then the entanglements must be purified as early in the entanglements' life as you can and as many rounds as the decoherence rate allows you. This results do agree with the theoretical predictions about the following concepts:

\begin{itemize}
    \item Effect of \textbf{decoherence}: as the longer an entanglement \textit{lives} the lower the fidelity will get, if we want these entanglements to be purified, the purification must be done when the least number of entanglements are present. Since we adopted a skewed swapping scheme, this translates into purifying at the first steps of each entanglement.
    \item Fidelity degradation due to \textbf{swapping}: in general, the fidelity after performing entanglement swapping is much less than the fidelities before swapping. It makes sense to purify before any entanglement swapping, which is a previously known result. If we wanted to counteract this degradation, we would have to perform a high number of purification rounds, which would activate the decoherence thus degrading fidelity even more.
\end{itemize}

The effect of decoherence is the most impactful parameter of the simulation and, depending on its value, the optimal strategy may shift to a more \textit{passive} style of purification (less purification rounds) to a more \textit{aggressive} strategy (higher purification rounds). What we can observe is that, for low decoherence rates $\gamma \sim 1$, the best strategy seems to be purify the initial entanglements all the rounds the decoherence allows you until the fidelity losses outweigh the purification protocols. In Figure~\ref{fig:0.1} we can clearly see the \textit{bumps} where the loss due to decoherence overcomes the effect of purifying the entanglements.

When dealing with high decoherence rates, the best strategy is not purifying at all and letting the system evolve freely. As we discussed earlier, the moment $\gamma$ starts increasing, the system desperately needs a way to reach the fidelity threshold, causing nonsensical purification strategies.

It is clear that different strategies may achieve similar or even equal E2E fidelities, which implies the non-existence of an absolute \textit{best strategy} in each case; and that, depending on the decoherence rates of your system, sometimes is better to purify at the beginning or not purifying at all.

\section{CONCLUSIONS AND FUTURE WORKS}\label{conclusion}

In this work, we investigate the role that entanglement purification processes play in enhancing the fidelity of end-to-end quantum communications. Our study consisted in understanding how different purification strategies impacted the performance of multi-node quantum networks. 

By applying techniques borrowed from dynamic programming, we were able to compute optimal purification strategies for different network scenarios. These strategies consistently maximize the E2E fidelity while managing the losses caused by decoherence. Our results seem to indicate that the earlier the purifications take place, the higher the E2E fidelity will be.

The results we obtained offer a guide on how to purify any general path between the source and destination in a quantum network. However, in a real-world quantum network, the number of nodes between the source and destination nodes would be small, and advances in technologies regarding quantum memories \cite{Mamann2025, kbwj-md9n} translate into decoherence rates that are more forgiving, which means that the optimal strategy is to purify as early as possible.

Although these results provide some understanding of the behavior of systems under purification protocols, it is important to mention that the processes that take place in real life are subject to more complex environmental interactions. Future work could extend this approach to more realistic interactions and probabilities, include path-finding algorithms within complex quantum networks, dive deeper into the phase transition phenomenon, or explore additional classes of purification protocols to further improve practical applicability. 

Overall, our findings demonstrate that proper planification of a purification strategy under the constraints ruled by the different sources of fidelity loss is the key to achieve routing protocols that guarantee high E2E fidelities.

\section*{Acknowledgments}
This work was supported in part by MCIU/AEI/10.13039/501100011033 (DISCOVERY Project) under Grant PID2023-148716OB-C31; in part by the ‘‘TRUFFLES: TRUsted Framework for Federated LEarning Systems, within the Strategic Cybersecurity Projects (INCIBE, Spain), funded by the Recovery, Transformation and Resilience Plan (European Union, Next Generation)’;’  in part by the Galician Regional Government under Project ED431B 2024/41 (GPC); and in part by Grant RETECH-LabCiber5G/6G funded by INCIBE with support from the “European Union NextGenerationEU/PRTR” and co-funded by Universidade de Vigo.

This research project was made possible through the access granted by the Galician Supercomputing Center (CESGA) to its supercomputing infrastructure. The supercomputer FinisTerrae III and its permanent data storage system have been funded by the NextGeneration EU 2021 Recovery, Transformation and Resilience Plan, ICT2021-006904, and also from the Pluriregional Operational Programme of Spain 2014-2020 of the European Regional Development Fund (ERDF), ICTS-2019-02-CESGA-3, and from the State Programme for the Promotion of Scientific and Technical Research of Excellence of the State Plan for Scientific and Technical Research and Innovation 2013-2016 State subprogramme for scientific and technical infrastructures and equipment of ERDF, CESG15-DE-3114


\bibliographystyle{unsrt}   
\bibliography{Biblio}

\end{document}